\documentclass[letterpaper,11pt]{article}

\setlength\hoffset{0in}
\setlength\voffset{0in}
\setlength\oddsidemargin{0in}
\setlength\evensidemargin{0in}
\setlength\topmargin{0in}
\setlength\headheight{0in}
\setlength\headsep{0in}
\setlength\textwidth{6.5in}
\setlength\textheight{9.0in}

\usepackage{times}

\usepackage{amsthm,amsmath,amsfonts}

\usepackage{graphicx}
\usepackage[small]{caption}
\usepackage[hidelinks]{hyperref}
\newcommand{\orcid}[1]{\,\href{https://orcid.org/#1}{\includegraphics[width=8pt]{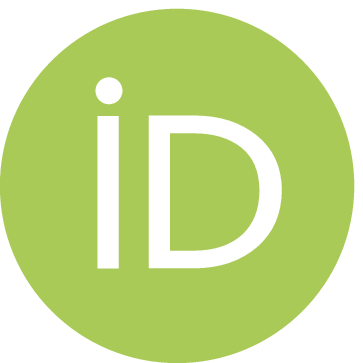}}}

\usepackage{xcolor}
\newcommand{\hl}[1]{\textcolor{violet}{#1}}
\newcommand{\red}[1]{\textcolor{red}{#1}}
\newcommand{\blue}[1]{\textcolor{blue}{#1}}
\newcommand{\true}[1]{\textcolor{red}{#1}}
\newcommand{\false}[1]{\textcolor{blue}{#1}}

\newcommand{\I}{\mathcal{I}}
\newcommand{\C}{\mathcal{C}}

\newcommand{\mis}[1]{\textsc{Max-IS-$#1$}}
\newcommand{\sat}[2]{\textsc{E$#1$-Occ-Max-E$#2$-SAT}}
\newcommand{\maxadj}{\textsc{Max-Adj}}
\newcommand{\minbrk}{\textsc{Min-Brk}}
\newcommand{\opt}{\mathrm{opt}}
\newcommand{\val}{\mathrm{val}}
\newcommand{\po}{\mbox{\footnotesize\textsl{partial}}}
\newcommand{\io}{\mbox{\footnotesize\textsl{interval}}}
\newcommand{\so}{\mbox{\footnotesize\textsl{semi}}}
\newcommand{\wo}{\mbox{\footnotesize\textsl{weak}}}
\newcommand{\lo}{\mbox{\footnotesize\textsl{linear}}}
\newcommand{\pp}{$(\po,\po)$}
\newcommand{\ww}{$(\wo,\wo)$}
\newcommand{\lp}{$(\lo,\po)$}
\newcommand{\li}{$(\lo,\io)$}
\newcommand{\lw}{$(\lo,\wo)$}
\newcommand{\ls}{$(\lo,\so)$}

\newtheorem{theorem}{Theorem}
\newtheorem{lemma}{Lemma}
\newtheorem{proposition}{Proposition}

\title{Partial order alignment by adjacencies and breakpoints}

\author{Rain Jiang\orcid{0000-0002-0144-942X}\qquad
Kai Jiang\orcid{0000-0001-8165-0571}\qquad
Minghui Jiang\orcid{0000-0003-1843-9292}\,\thanks{\texttt{ dr.minghui.jiang at gmail.com}}\medskip\\
Home School, USA}

\date{}

\begin{document}

\maketitle

\begin{abstract}
Linearizing two partial orders
to maximize the number of adjacencies and minimize the number of breakpoints
is APX-hard.
This holds even if one of the two partial orders
is already a linear order and the other is an interval order,
or if both partial orders are weak orders.
\end{abstract}

\section{Introduction}

A \emph{partial order} $\prec$
on a set $\Sigma$ of elements
is a binary relation on certain pairs of elements in $\Sigma$,
such that for all $a,b,c\in \Sigma$,
\begin{itemize}\setlength\itemsep{0pt}
\item
\emph{irreflexivity}: not $a \prec a$, that is, no element is related to itself,
\item
\emph{transitivity}: if $a \prec b$ and $b \prec c$, then $a \prec c$,
\item
\emph{asymmetry}: if $a \prec b$ then not $b \prec a$.
\end{itemize}

With respect to a partial order $\prec$ on $\Sigma$,
we say that two distinct elements $a$ and $b$ in $\Sigma$ are \emph{incomparable}
if neither $a \prec b$ nor $b \prec a$.
A partial order $\prec$ on $\Sigma$ is a \emph{linear order}
if for all $a,b \in \Sigma$, either $a = b$,
or $a \prec b$, or $b \prec a$.
A \emph{linear extension} of a partial order $\prec$ on $\Sigma$
is a linear order $\prec'$
on the same set $\Sigma$ of elements,
such that for all $a,b\in \Sigma$, $a \prec b \implies a \prec' b$.

A partial order $\prec$ on $\Sigma$ can be represented as
a directed acyclic graph $G$ with vertex set $\Sigma$
such that $u \prec v$ if and only if there is a directed path from $u$ to $v$ in $G$.
Then a linear extension of $\prec$ corresponds to a topological sort of $G$.

A linear order $\prec$ on $\Sigma$ is commonly represented by
a unique permutation $\pi$ of $\Sigma$
such that $a \prec b$ if and only if $a$ precedes $b$ in $\pi$.
Henceforth when we denote a linear order by $\prec$,
we also use the same symbol $\prec$
to refer to the unique permutation representing the linear order.

\medskip
In this paper, we study the problem of partial order alignment.
Specifically,
given two partial orders $\Gamma$ and $\Pi$ on the same set $\Sigma$ of elements,
we want to linearize the two partial orders $\Gamma$ and $\Pi$
into two linear orders $\Gamma'$ and $\Pi'$, respectively,
such that the two permutations $\Gamma'$ and $\Pi'$ are as similar as possible,
by certain genome rearrangement measures.

In comparative genomics,
a genomic map can be represented by a partial order on
a set $\Sigma$ of markers that annotate genomes.
For two markers $a$ and $b$ in $\Sigma$,
and for two permutations $\pi'$ and $\pi''$ of $\Sigma$,
the ordered pair $(a,b)$ is called an \emph{adjacency} of $\pi'$ and $\pi''$
if $a$ appears immediately before $b$ in both $\pi'$ and $\pi''$.
An ordered pair of consecutive elements in $\pi'$,
if not an adjacency of $\pi'$ and $\pi''$,
is called a \emph{breakpoint} of $\pi'$ with respect to $\pi''$.
Note that the number of breakpoints of $\pi'$ with respect to $\pi''$
is the same as the number of breakpoints of $\pi''$ with respect to $\pi'$,
which is equal to the number of markers in $\Sigma$ minus one then
minus the number of adjacencies of $\pi'$ and $\pi''$.

Let \maxadj\ (respectively, \minbrk)
be the problem of
linearizing two given partial orders $\Gamma$ and $\Pi$
on the same set $\Sigma$ of markers
into two linear orders $\Gamma'$ and $\Pi'$, respectively,
such that the number
$n_\textsl{adj}$
of adjacencies of the two permutations $\Gamma'$ and $\Pi'$
(respectively, the number
$n_\textsl{brk}$
of breakpoints of $\Gamma'$ with respect to $\Pi'$)
is maximized (respectively, minimized).
With a solution to either problem,
the two numbers
$n_\textsl{adj}$
and
$n_\textsl{brk}$
can then be used to measure
the similarity and the distance, respectively,
of the two genomic maps represented by $\Gamma$ and $\Pi$.

\medskip
An \emph{interval graph} is the intersection graph $G$ of a family $\I$ of open intervals,
with one vertex for each interval in $\I$,
and with an edge between two vertices if and only if the corresponding two intervals intersect.
Here $\I$ is called a \emph{representation} of $G$.

For two intervals $I$ and $J$,
we say that $I$ \emph{precedes} $J$,
if $I$ is disjoint from and to the left of $J$.
An \emph{interval order} is a partial order $\prec$ on a family $\I$ of open intervals,
such that for all $I,J\in\I$,
$I \prec J$ if and only if $I$ precedes $J$~\cite{Fi85}.
Here we also call $\I$ a \emph{representation} of $\prec$.

An interval graph is a \emph{proper interval graph}
if it has a representation
in which no interval properly contains another interval.
An interval graph is a \emph{unit interval graph}
if it has a representation
in which all intervals have the same length.
It is well known that an interval graph is a proper interval graph
if and only if it is a unit interval graph~\cite{BW99}.
A semiorder is a partial order on elements with numerical scores,
where elements with widely differing scores are ordered by their scores,
and where elements with close scores within a given margin of error
are deemed incomparable~\cite{Lu56}.
A \emph{semiorder} can be equivalently defined as
an interval order with a representation
in which all intervals have the same length~\cite{Fi85}.

A \emph{cluster graph} is a disjoint union of cliques.
A weak order can be seen as a relaxation
of a linear order where some elements may be tied with each other.
More precisely, a \emph{weak order} is a partial order $\prec$ on a set $\Sigma$
with a partition into $k$ subsets
$\Sigma_1,\Sigma_2,\ldots,\Sigma_k$
which we call \emph{buckets},
such that for $a \in \Sigma_i$ and $b \in \Sigma_j$,
$a \prec b$ if and only if $i < j$.
Note that elements in the same bucket are incomparable.
Just as an interval order is analogous to an interval graph,
and as a semiorder is analogous to a unit\,/\,proper interval graph,
a weak order is analogous to a cluster graph.
A partial order (respectively, a graph) is
a weak order (respectively, a cluster graph)
if and only if
it is an interval order (respectively, an interval graph)
with a representation in which
all intervals have length $1$
and have integer endpoints.

Denote by
$\po$ the class of all partial orders,
and denote by $\lo$ the class of all linear orders.
Similarly, denote by $\io$, $\so$, and $\wo$
the classes of interval orders, semiorders,
and weak orders, respectively.
Then we have the following hierarchy:
$$
\lo
\;\subseteq\;
\wo
\;\subseteq\;
\so
\;\subseteq\;
\io
\;\subseteq\;
\po
$$

For $\C_1,\C_2 \in \{ \lo, \wo, \so, \io, \po \}$,
denote by
\maxadj$(\C_1,\C_2)$ and \minbrk$(\C_1,\C_2)$,
respectively,
the two problems \maxadj\ and \minbrk\
on a partial order of class $\C_1$ and a partial order of class $\C_2$.
In general,
the problem \minbrk\pp\ is NP-hard~\cite{FJ07}.
Moreover, \minbrk\lp\ is NP-hard~\cite{BBHGBE07}
and even APX-hard~\cite{BFR13}.

We obtain the following results:

\begin{theorem}\label{thm:li}
\maxadj\li\ and \minbrk\li\
are APX-hard.
\end{theorem}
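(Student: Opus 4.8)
Because $\Gamma$ is already a linear order, we have $\Gamma' = \Gamma$, and the task becomes: choose a linear extension $\Pi'$ of the interval order $\Pi$ that shares as many adjacencies as possible with the fixed permutation $\Gamma$. The adjacencies realized by $\Pi'$ are exactly the consecutive pairs of $\Gamma$ that stay consecutive, in the same orientation, in $\Pi'$; their maximal runs cut $\Gamma$ into contiguous blocks, and $n_\textsl{brk}$ is one less than the number of blocks. So \maxadj\li{} (equivalently \minbrk\li) asks for the fewest contiguous blocks into which $\Gamma$ can be cut so that the blocks reassemble, in some order, into a linear extension of $\Pi$. I will give an L-reduction from an APX-hard problem, engineered so that at the optimum both $n_\textsl{adj}$ and $n_\textsl{brk}$ are $\Theta(|\Sigma|)$; this linear scaling is what lets an approximation guarantee for one objective transfer to the other, and back to the source problem, so it suffices to treat \maxadj.

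The plan is to reduce from a bounded-occurrence variant of MAX-SAT, say \sat{k}{s} for suitable constants $k,s$, which is APX-hard. Given such a formula $\varphi$ on variables $x_1,\dots,x_m$ and clauses $C_1,\dots,C_\ell$ with $\ell = O(m)$, I build $\Gamma$ as a long ``backbone'' permutation into which I splice, for each variable $x_i$, a \emph{variable gadget} and, for each clause $C_j$, a \emph{clause gadget}, over a marker set $\Sigma$ of size $O(m)$; the interval order $\Pi$ is specified directly by an explicit family of open intervals, one per marker. Each variable gadget is designed so that, in any linear extension of $\Pi$, one is driven to a binary decision (``$x_i$ true'' versus ``$x_i$ false''): the two options realize the same number of adjacencies inside the gadget but differ in which marker is exposed at the gadget's boundary, and the internal chain of the $k$ occurrence-markers of $x_i$ forces the same option at every occurrence, so consistency of the assignment comes for free. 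Each clause gadget carries a marker with a ``home'' slot that, in a linear extension of $\Pi$, is available precisely when some incident variable gadget has chosen the literal-satisfying option; when the home is available this marker slots in for free and yields one extra adjacency, and otherwise it must be inserted elsewhere and destroys one. Summing, the maximum number of adjacencies equals a fixed baseline, linear in $m$, plus the maximum number of simultaneously satisfiable clauses of $\varphi$, which is the L-reduction.

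The real content --- and the reason this is not a corollary of the known APX-hardness of \minbrk\lp{} --- is that $\Pi$ must be an \emph{interval} order, that is, contain no induced $2+2$ (two disjoint two-element chains); informally, interval orders are ``almost linear'', so the freedom available to a linear extension is very local. The work then splits into two parts. First, one must present the interval representation explicitly (equivalently, verify $2+2$-freeness) and check that it really supports $m$ essentially independent binary choices together with $\ell$ clause tests laid out side by side; realizing the OR over the $s$ literals of a clause as $s$ independent ``openings'' of one home slot, all inside a single interval representation, is the delicate point. Second --- and this is the step I expect to be the main obstacle --- one must prove a \emph{normalization lemma}: every linear extension of $\Pi$ can be rewritten, without decreasing its number of adjacencies, into one that respects the intended two-state behaviour of every gadget, so that a near-optimal linear extension decodes to a near-optimal assignment for $\varphi$. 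Once the interval representation and the normalization lemma are in hand, the adjacency bookkeeping and the transfer of the APX gap are routine, using the $\Theta(|\Sigma|)$ scaling above.
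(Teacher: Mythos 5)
Your opening reductions are correct and match the paper's framing: since $\Gamma$ is linear, $\Gamma'=\Gamma$; adjacencies group into blocks of $\Gamma$ and $n_\textsl{brk}$ is the number of blocks minus one; and one construction with optima of size $\Theta(|\Sigma|)$ can serve as an L-reduction for both \maxadj\li\ and \minbrk\li\ simultaneously. But beyond this, what you have written is a plan, not a proof. The entire content of the theorem --- the explicit marker set, the explicit interval representation of $\Pi$, the verification that variable gadgets really force a two-state choice, that clause gadgets really test satisfaction, and the normalization lemma showing that an arbitrary (near-)optimal linear extension decodes to a (near-)optimal assignment --- is deferred, and you say yourself that these are ``the delicate point'' and ``the main obstacle.'' Since the L-reduction bookkeeping is routine only once the analogue of the key combinatorial lemma exists, the proposal as it stands has a genuine gap: the gadgets on which everything rests are never constructed.

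There is also a concrete risk in the route you chose. Reducing from bounded-occurrence MAX-SAT requires the truth value selected in a variable gadget to propagate consistently to all occurrences of that variable; you assert this ``comes for free'' from an internal chain, but enforcing such non-local consistency is exactly what is difficult inside an interval order, whose linear extensions have only local freedom, and it is precisely what the paper avoids. The paper instead reduces from \mis{3}: each edge $j$ contributes markers $p_j,e_j,q_j$ appearing in $\Gamma$ as $p_j e_j q_j$, while in $\Pi$ the interval for $q_j$ precedes the interval for $p_j$, so at most one of $p_j e_j$, $e_j q_j$ can be an adjacency; moreover the intervals are arranged so that an adjacency $u_a v_a$ for an endpoint $a$ of $j$ blocks one of the two, and a short replacement argument (the normalization step you postpone) shows that $m+k$ adjacencies exist if and only if $G$ has an independent set of size $k$, yielding $\alpha=7$, $\alpha=29$ and $\beta=1$. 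No consistency between distant occurrences is ever needed; the paper uses a SAT-style construction with consistency gadgets only for the weak-order case (Theorem~\ref{thm:ww}), where the bucket structure makes it manageable. So to complete your argument you must either actually build and verify the interval representation, the OR-test, and the normalization lemma for your SAT gadgets, or switch to a source problem such as \mis{3} that eliminates long-range consistency altogether.
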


\begin{theorem}\label{thm:ww}
\maxadj\ww\ and \minbrk\ww\
are APX-hard.
This holds even if every bucket of the two weak orders
has at most two elements.
\end{theorem}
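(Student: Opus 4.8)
The plan is to give an $L$-reduction from \sat{3}{2}, the maximum-satisfiability problem in which every clause has exactly two literals and every variable occurs exactly three times, so that $3n = 2m$ for an instance $\phi$ with $n$ variables and $m$ clauses; this problem is well known to be APX-hard, and $\opt(\phi) = \Theta(m)$ always, since a uniformly random assignment satisfies a constant fraction of the clauses. The guiding observation is that a weak order all of whose buckets have size at most two is just a linear arrangement in which, at each two-element bucket, the two elements may be swapped, so linearizing two such weak orders amounts to making one independent binary choice per two-element bucket in each, and the number of common adjacencies is a sum of terms each depending on a bounded number of these choices --- precisely the sort of bounded Boolean optimization into which \sat{3}{2} embeds. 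Concretely, from $\phi$ we construct in polynomial time two weak orders $\Gamma,\Pi$ on a common ground set $\Sigma$ with $|\Sigma| = \Theta(m)$, every bucket of $\Gamma$ and of $\Pi$ of size at most two, and a fixed integer $b = \Theta(m)$ computable from $\phi$, such that $\opt_\textsl{adj}(\Gamma,\Pi) = b + \opt(\phi)$ and such that from any pair of linearizations $(\Gamma',\Pi')$ with $k$ common adjacencies one can recover in polynomial time a truth assignment for $\phi$ satisfying at least $k - b$ clauses.

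The construction uses two families of constant-size gadgets joined only through shared size-one ``anchor'' elements. A \emph{variable gadget} for $x_i$ is a cyclic chain of three short blocks of new elements, one block per occurrence of $x_i$; inside each block a single size-two bucket of $\Gamma$ is interleaved with anchors of $\Pi$ (or with the roles of $\Gamma$ and $\Pi$ exchanged) so that one orientation of that bucket gives one more local common adjacency than the other, the preferred orientation being called \emph{true}, and the blocks are linked by further anchors so that every two consecutive blocks prefer to carry the same value by a positive additive amount; since a non-uniform choice on a three-cycle violates at least two such preferences, in any near-optimal linearization the three blocks agree, and each exports that common value to the clauses through a literal endpoint. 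A \emph{clause gadget} for $C_j=(\ell\vee\ell')$ takes the two literal endpoints routed to it plus two private anchors, has no free bucket of its own, and is wired so that it contributes one extra common adjacency when at least one of $\ell,\ell'$ is true and a fixed smaller number of adjacencies otherwise. Every bucket created, in both $\Gamma$ and $\Pi$, has size at most two.

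Folding into $b$ the adjacencies from the variable gadgets when uniform, the baseline contribution of each clause gadget, and the fixed number of adjacencies crossing gadget boundaries, a pair of linearizations uniform on every variable has exactly $b$ plus the number of clauses satisfied by the induced assignment as its adjacency count, while a non-uniform variable strictly lowers it; hence $\opt_\textsl{adj}(\Gamma,\Pi) = b + \opt(\phi)$, which is $O(\opt(\phi))$ since $b = \Theta(m) = \Theta(\opt(\phi))$, giving the first $L$-reduction inequality. For the second, given $(\Gamma',\Pi')$ with $k$ adjacencies, repair each non-uniform variable gadget by switching its blocks to the majority value: because each block feeds just one clause occurrence this costs at most one adjacency per disagreeing block but recovers the strictly larger consistency bonus, so $k$ does not decrease, and the induced assignment $\tau$ satisfies $\mathrm{sat}(\tau) \ge k - b$; thus $\opt(\phi) - \mathrm{sat}(\tau) \le (b + \opt(\phi)) - k = \opt_\textsl{adj}(\Gamma,\Pi) - k$, an $L$-reduction with constant $1$. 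This shows \maxadj\ww\ is APX-hard even when every bucket has size at most two. For \minbrk\ww, the number of breakpoints of $\Gamma'$ with respect to $\Pi'$ is $|\Sigma|-1$ minus the number of adjacencies, so the breakpoint excess over the optimum equals the adjacency deficit; since the $\Theta(m)$ disjoint gadget blocks force $\Omega(m)$ of the $|\Sigma|-1$ consecutive pairs to be breakpoints in every linearization, $\opt_\textsl{brk}(\Gamma,\Pi) = \Theta(m) = \Theta(\opt(\phi))$, so the same maps give an $L$-reduction from \sat{3}{2} to \minbrk\ww\ as well, and \minbrk\ww\ is APX-hard under the same restriction.

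The hard part will be the gadget design under the simultaneous requirement that \emph{both} weak orders have all buckets of size at most two, leaving only independent binary per-bucket choices: one must thread the variable, consistency, and clause gadgets through shared size-one anchors so that (a) within each gadget the intended orientation is the unique local optimum and the consistency bonus of a variable is an additive constant, independent of the rest of the instance, that strictly exceeds the at-most-one-per-disagreeing-block loss it can cause at the clauses; (b) a clause gadget gains its extra adjacency exactly on the satisfied configurations; and (c) distinct gadgets create no common adjacency across their boundaries beyond the fixed ones already counted in $b$. Establishing (a)--(c) is a finite case analysis of the constant-size gadgets together with a routine accounting of cross-gadget pairs; once it is in hand, the counting and both $L$-reduction inequalities above are immediate, for \maxadj\ww\ and \minbrk\ww\ alike.
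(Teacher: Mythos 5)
Your overall framing---an L-reduction from \sat{3}{2} with $\beta=1$, using the facts that $3n=2m$, that $\opt(\phi)=\Theta(m)$, and that breakpoints are just $|\Sigma|-1$ minus adjacencies so the same construction handles \minbrk\ww---matches the paper's strategy. But the proposal has a genuine gap: the actual gadget construction, which is the entire substance of the theorem, is never given. You state target properties (a)--(c) that some hypothetical variable, consistency, and clause gadgets should satisfy under the constraint that \emph{both} weak orders have buckets of size at most two, and then defer their existence to ``a finite case analysis'' that is not carried out. Nothing in the sketch establishes that such gadgets exist: for instance, a clause gadget ``with no free bucket of its own'' that nevertheless gains exactly one adjacency precisely when the clause is satisfied, or a three-block cyclic variable gadget whose consistency bonus is an additive constant strictly exceeding the at-most-one-per-block loss at the clauses, are design claims, not constructions, and the bucket-size-two restriction is exactly what makes them nontrivial. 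Your quantitative claims ($\opt_\textsl{adj}=b+\opt(\phi)$, the majority-repair argument, the $\Omega(m)$ breakpoint bound) all rest on these unverified properties, so the two L-reduction inequalities are not actually proved.

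For comparison, the paper's proof is almost entirely devoted to this missing part: it builds, per variable, seven variable markers $p_i,\ldots,v_i$, two selection pairs $(a_i^+,b_i^+)$, $(a_i^-,b_i^-)$, and a dummy marker, and per clause two literal pairs $(e_j^h,f_j^h)$ plus a separation marker; the placement of the dummy marker depends on the sign of the middle literal $x_i^2$, and the source instances are restricted so that each variable's three literals are not all of one sign (a point your reduction would also need, or a substitute for). The counting lemma (at least $k$ satisfiable clauses iff at least $4n+k$ adjacencies) is then proved via an explicit conflict analysis (properties 1--5 and a reordering/consistency argument), which is where the ``true'' pattern $\{a_i^+b_i^+,\,q_ir_i,\,s_it_i,\,u_iv_i\}$ versus ``false'' pattern $\{p_iq_i,\,r_is_i,\,t_iu_i,\,a_i^-b_i^-\}$ dichotomy is forced. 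Until you supply concrete gadgets and verify an analogue of that lemma, the proposal is a plan for a proof rather than a proof.
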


\begin{proposition}\label{prp:lw}
\maxadj\lw\ and \minbrk\lw\ admit a polynomial-time exact algorithm.
\end{proposition}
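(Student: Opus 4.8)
\section{Proof of Proposition~\ref{prp:lw}}

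Since the number of breakpoints of $\Gamma'$ with respect to $\Pi'$ equals $|\Sigma|-1$ minus the number of adjacencies of $\Gamma'$ and $\Pi'$, it suffices to give a polynomial-time exact algorithm for \maxadj\lw; the same $\Pi'$ it returns is also optimal for \minbrk\lw. Write $n=|\Sigma|$ and the linear order as $\Gamma = a_1 a_2 \cdots a_n$. As $\Gamma$ is already linear, $\Gamma'=\Gamma$ is forced, so the only freedom is the choice of a linear extension $\Pi'$ of the weak order $\Pi$, that is, a permutation $\sigma_1 \sigma_2 \cdots \sigma_k$ in which each $\sigma_j$ is an ordering of the bucket $\Sigma_j$. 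After discarding empty buckets (which changes neither the set of linear extensions of $\Pi$ nor the optimum) we may assume every bucket is nonempty. The only pairs that can be adjacencies of $\Gamma$ and $\Pi'$ are the $n-1$ consecutive pairs $(a_i,a_{i+1})$ of $\Gamma$; such a pair with $a_i\in\Sigma_p$ and $a_{i+1}\in\Sigma_q$ is not realizable unless $q\in\{p,p+1\}$, because $q<p$ contradicts the order and $q>p+1$ forces a nonempty block between the two in $\Pi'$. If $q=p$ (an \emph{internal} pair) it is an adjacency of $\Pi'$ precisely when $a_i$ immediately precedes $a_{i+1}$ inside $\sigma_p$; if $q=p+1$ (a \emph{crossing} pair) it is an adjacency precisely when $a_i$ is the last element of $\sigma_p$ and $a_{i+1}$ the first element of $\sigma_{p+1}$. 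In particular at most one crossing pair is realized at each of the $k-1$ junctions of $\Pi'$.

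The key step is a structural reduction. For a bucket $\Sigma_j$, group its elements into \emph{runs}: the maximal blocks of $\Gamma$-consecutive elements all lying in $\Sigma_j$. If $\Sigma_j$ has $m_j$ runs, then the internal pairs with both endpoints in $\Sigma_j$ are exactly the $|\Sigma_j|-m_j$ within-run consecutive pairs, and all of them are realized whenever every run of $\Sigma_j$ is laid out contiguously and in $\Gamma$-order inside $\sigma_j$. I claim that some optimal $\Pi'$ has this contiguous form in every bucket. To see it, take any $\Pi'$, fix a bucket $\Sigma_j$, and let $\alpha$ and $\omega$ be the first and last elements of $\sigma_j$. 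If the crossing pair at the junction just before $\Sigma_j$ is realized then its second component is $\alpha$, whose $\Gamma$-predecessor lies in $\Sigma_{j-1}$, so $\alpha$ is the head of a run of $\Sigma_j$; symmetrically, if the crossing pair just after $\Sigma_j$ is realized then $\omega$ is the tail of a run of $\Sigma_j$. When these two runs are distinct, or at most one of the two crossing pairs is realized, or $m_j=1$, one can reorder $\sigma_j$ into contiguous runs keeping the required run first and the required run last; this preserves both crossing pairs and can only increase the number of realized internal pairs. In the only remaining case the two runs coincide in a single run $R$, both crossing pairs are realized, and $m_j\ge 2$; then $R$ occupies both ends of $\sigma_j$ with other elements in between, so $R$ is not contiguous and at least one within-run pair of $R$ is unrealized. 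Reordering $\sigma_j$ into contiguous runs with $R$ last then realizes all $|\Sigma_j|-m_j$ internal pairs, a gain of at least one, while it can cost at most the crossing pair just before $\Sigma_j$, so the value does not decrease. Applying this to $j=1,2,\ldots,k$ in turn yields an optimal $\Pi'$ in contiguous form.

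For a $\Pi'$ in contiguous form, the number of adjacencies equals $\sum_{j=1}^{k}(|\Sigma_j|-m_j)$ plus the number of junctions at which the head of the first run of $\sigma_{j+1}$ is the $\Gamma$-successor of the tail of the last run of $\sigma_j$. Hence only the ordered pair (first run, last run) of each $\sigma_j$ matters; within $\Sigma_j$ these two runs may be chosen independently when $m_j\ge 2$, and must coincide when $m_j=1$. This residual problem is solved by a left-to-right dynamic program over the buckets whose state stores which run is the last run of the current $\sigma_j$: the transition into $\Sigma_{j+1}$ ranges over the admissible (first run, last run) choices for $\sigma_{j+1}$, adds $|\Sigma_{j+1}|-m_{j+1}$, and adds $1$ exactly when the chosen first run of $\sigma_{j+1}$ is the one headed by the $\Gamma$-successor of the stored tail. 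Because $\sum_j m_j\le n$, this dynamic program, together with the backtracking that recovers an optimal $\Pi'$, runs in $O(n^2)$ time.

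The hard part is the structural reduction: one must identify exactly when a realized crossing pair forces a particular element to an end of a bucket's block, rule out that splitting a run can ever be worth more than the within-run adjacency it sacrifices, and verify that the bucket-by-bucket rearrangement never decreases the objective. After that, the dynamic program and its analysis are routine.
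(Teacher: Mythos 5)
Your proof is correct and takes essentially the same approach as the paper: partition each bucket into maximal $\Gamma$-contiguous runs (the paper's ``blocks''), argue by an exchange/replacement step that some optimal linear extension keeps every run contiguous and in $\Gamma$-order, and then solve the residual problem by a left-to-right dynamic program over buckets whose state records the last run of the current bucket --- exactly the paper's table $m_\textsl{adj}(i,b)$, with your write-up spelling out the exchange argument that the paper dismisses as ``a simple replacement argument.'' One wording caution: when $m_j\ge 2$ the first and last runs of $\sigma_j$ must be \emph{distinct}, so ``chosen independently'' must be read (as your word ``admissible'' suggests) as ranging over ordered pairs of distinct runs, since allowing them to coincide would let the dynamic program claim both crossing adjacencies at a bucket that can realize only one of them.
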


Our definition of partial orders
is one of two common definitions which are slightly different.
All problems and results in this paper can be equivalently formulated in terms of
\emph{non-strict} partial orders
instead of
\emph{strict} partial orders defined here.

\section{Preliminaries}

Given two optimization problems X and Y,
an \emph{L-reduction}~\cite{PY91} from X to Y
consists of two polynomial-time functions $f$ and $g$
and two positive constants $\alpha$ and $\beta$
satisfying the following two properties:
\begin{enumerate}\setlength\itemsep{0pt}
\item
For every instance $x$ of X, $f(x)$ is an instance of Y such that
\begin{equation}\label{eq:f}
\opt(f(x)) \le \alpha\cdot \opt(x).
\end{equation}
\item
For every feasible solution $y$ to $f(x)$, $g(y)$ is a feasible solution to $x$ such that
\begin{equation}\label{eq:g}
|\opt(x) - \val(g(y))|
\le \beta\cdot |\opt(f(x)) - \val(y)|.
\end{equation}
\end{enumerate}
Here $\opt(x)$ denotes the value of the optimal solution to an instance $x$,
and $\val(y)$ denotes the value of a solution $y$.
The two properties of an L-reduction imply the following inequality
on the relative errors of approximation:
$$
\frac{|\opt(x) - \val(g(y))|}{\opt(x)}
\le \alpha\beta\cdot
\frac{|\opt(f(x)) - \val(y)|}{\opt(f(x))}.
$$
Thus if there is an L-reduction from X to Y,
and if X is NP-hard to approximate within some constant relative error $\epsilon$,
then Y is NP-hard to approximate within a constant relative error $\epsilon/(\alpha\beta)$.

\section{APX-hardness of aligning a linear order and an interval order}

In this section we prove Theorem~\ref{thm:li}.
We prove the APX-hardness of the two problems
\maxadj\li\ and \minbrk\li\
by two L-reductions (based on the same construction)
from the APX-hard problem
\mis{3}, i.e.,
\textsc{Maximum Independent Set} in graphs of maximum degree $3$~\cite{AK00,CC06}.

Let $G$ be a graph of maximum degree $3$,
with $n$ vertices $\{1,2,\ldots,n\}$ and $m$ edges $\{1,2,\ldots,m\}$.
We will construct a set $\Sigma$ of markers,
a linear order $\Gamma$, and an interval order $\Pi$ in the following.

\medskip
There are $3n + 4m$ markers in the set $\Sigma$:
\begin{itemize}\setlength\itemsep{0pt}
\item
$2n$ markers including two vertex markers $u_i$ and $v_i$
for each vertex $i$, $1 \le i \le n$.
\item
$3m$ markers including
two selection markers $p_j$ and $q_j$ and an edge marker $e_j$
for each edge $j$, $1 \le j \le m$.
\item
$n + m$ separation markers $z_h$, $1 \le h \le n + m$.
\end{itemize}

The linear order $\Gamma$ is represented by the following permutation of $\Sigma$:
$$
\Gamma:
\quad
u_1 v_1
\;\;z_1\;\;
u_2 v_2
\;\;z_2\;\;
\ldots\;\;
u_n v_n
\;\;z_n\;\;
\quad
p_1 e_1 q_1
\;\;z_{n+1}\;\;
p_2 e_2 q_2
\;\;z_{n+2}\;\;
\ldots\;\;
p_m e_m q_m
\;\;z_{n+m}
.
$$

For each edge $j$, $1 \le j \le m$,
denote by $l_j$ and $r_j$, $1 \le l_j < r_j \le n$,
the two vertices incident to $j$.
Conversely, for each vertex $i$, $1 \le i \le n$,
\begin{itemize}\setlength\itemsep{0pt}
\item
denote by $\sigma_i$ the number of edges $j$ with $l_j = i$,
and denote by $\subset_{i,1},\subset_{i,2},\ldots,\subset_{i,\sigma_i}$ these $\sigma_i$ edges,
\item
denote by $\tau_i$ the number of edges $j$ with $r_j = i$,
and denote by $\supset_{i,1},\supset_{i,2},\ldots,\supset_{i,\tau_i}$ these $\tau_i$ edges.
\end{itemize}

To construct the interval order $\Pi$,
we first construct a sequence $Z$ of the markers in $\Sigma$:
$$
Z:
\quad
z_1 z_2 \ldots z_n z_{n+1} \ldots z_{n+m}
\quad
\langle u_1 v_1 \rangle\;
\langle u_2 v_2 \rangle\;
\ldots\;
\langle u_n v_n \rangle,
$$
where $\langle u_i v_i \rangle$ for $1 \le i \le n$ is a subsequence of markers
$$
\langle u_i v_i \rangle:
\quad
\blue{q_{\subset_{i,1}}
\ldots
q_{\subset_{i,\sigma_i}}}
\quad
u_i
\quad
\blue{p_{\supset_{i,1}} e_{\supset_{i,1}}
\ldots
p_{\supset_{i,\tau_i}} e_{\supset_{i,\tau_i}}}
\quad
\red{e_{\subset_{i,1}} q_{\subset_{i,1}}
\ldots
e_{\subset_{i,\sigma_i}} q_{\subset_{i,\sigma_i}}}
\quad
v_i
\quad
\red{p_{\supset_{i,1}}
\ldots
p_{\supset_{i,\tau_i}}}
\;
.
$$

Note that for each edge $j$, $1 \le j \le m$,
each of the three markers $p_j$, $q_j$ and $e_j$ occurs twice in $Z$:
\begin{itemize}\setlength\itemsep{0pt}
\item
$p_j$ occurs
twice in $\langle u_i v_i \rangle$ for $i = r_j$,
\item
$q_j$ occurs
twice in $\langle u_i v_i \rangle$ for $i = l_j$,
\item
$e_j$ occurs
once in $\langle u_i v_i \rangle$ for $i = l_j$,
and once in $\langle u_i v_i \rangle$ for $i = r_j$.
\end{itemize}

\begin{figure}[htbp]
\centering\includegraphics{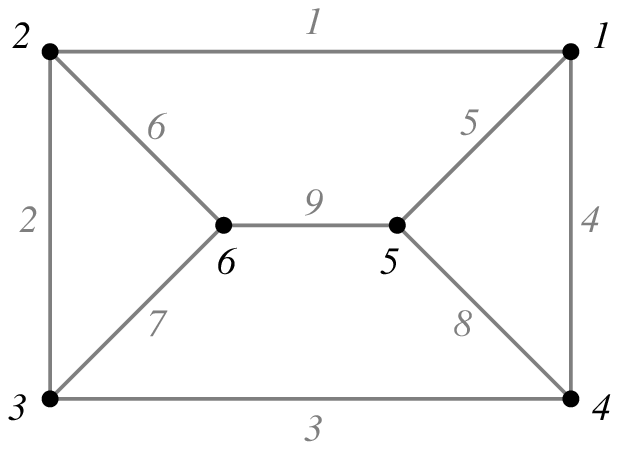}
$$
\begin{array}{ccccccc}
\langle u_1 v_1 \rangle: &
\blue{q_1 \; q_4 \; q_5} & u_1 & & \red{e_1 q_1 \; e_4 q_4 \; e_5 q_5} & v_1 & \\
\langle u_2 v_2 \rangle: &
\blue{q_2 \; q_6} & u_2 & \blue{p_1 e_1} & \red{e_2 q_2 \; e_6 q_6} & v_2 & \red{p_1}\\
\langle u_3 v_3 \rangle: &
\blue{q_3 \; q_7} & u_3 & \blue{p_2 e_2} & \red{e_3 q_3 \; e_7 q_7} & v_3 & \red{p_2}\\
\langle u_4 v_4 \rangle: &
\blue{q_8} & u_4 & \blue{p_3 e_3 \; p_4 e_4} & \red{e_8 q_8} & v_4 & \red{p_3 \; p_4}\\
\langle u_5 v_5 \rangle: &
\blue{q_9} & u_5 & \blue{p_5 e_5 \; p_8 e_8} & \red{e_9 q_9} & v_5 & \red{p_5 \; p_8}\\
\langle u_6 v_6 \rangle: &
& u_6 & \blue{p_6 e_6 \; p_7 e_7 \; p_9 e_9} & & v_6 & \red{p_6 \; p_7 \; p_9}
\end{array}
$$
\caption{A cubic graph $G$ and the corresponding subsequences $\langle u_i v_i \rangle$ of $Z$.}
\label{fig:mis}
\end{figure}

Refer to Figure~\ref{fig:mis} for an example.

For each marker that occurs only once in $Z$,
replace it by two consecutive copies.
Let $Z_2$ be the resulting sequence in which every marker in $\Sigma$ occurs exactly twice.
For each marker in $\Sigma$,
construct an interval between the two integer indices of its two occurrences in $Z_2$.
Let $\I$ be the family of intervals thus constructed,
including one interval for each marker in $\Sigma$.
Then $\Pi$ is represented by $\I$.
This completes the construction.

\begin{lemma}\label{lem:li}
$G$ has an independent set of at least $k$ vertices
if and only if
$\Pi$ admits a linear extension
having at least $m + k$ adjacencies with $\Gamma$.
\end{lemma}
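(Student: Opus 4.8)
The plan is first to read off, from the positions of the two occurrences of each marker in $Z$ (hence $Z_2$ and $\I$), the following facts about $\prec$ in $\Pi$.
(P1) $z_1\prec z_2\prec\cdots\prec z_{n+m}$, every $z_h$ precedes every non-separation marker, and $u_1\prec v_1\prec u_2\prec v_2\prec\cdots\prec u_n\prec v_n$ (the separation and vertex markers all have ``tiny'' intervals).
(P2) For each edge $j$ (with $l_j<r_j$): $u_{r_j}\prec p_j\prec u_{r_j+1}$ with $p_j$ incomparable to $v_{r_j}$; $v_{l_j-1}\prec q_j\prec v_{l_j}$ with $q_j$ incomparable to $u_{l_j}$; $u_{l_j}\prec e_j\prec v_{r_j}$; $q_j\prec p_j$; and $e_j$ is incomparable to each of $p_j,q_j$ (here $u_{n+1},v_0$ are omitted).
(P3) No marker lies strictly between $u_i$ and $v_i$ in $\Pi$, for any $i$.
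From (P1), in any linear extension $\Pi'$ of $\Pi$ the separation markers occupy the first $n+m$ positions in the order $z_1,\ldots,z_{n+m}$; since no two $z_h$ are $\Gamma$-consecutive and $z_{n+m}$ is last in $\Gamma$, no adjacency of $\Pi'$ with $\Gamma$ involves a $z_h$. Hence every adjacency of $\Pi'$ with $\Gamma$ lies in $\{(u_i,v_i)\}\cup\{(p_j,e_j),(e_j,q_j)\}$, and since $q_j\prec p_j$, at most one of $(p_j,e_j),(e_j,q_j)$ is an adjacency for each $j$.

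\textbf{``Only if''.} Suppose $\Pi'$ is a linear extension with $a$ adjacencies. Let $T$ be the set of $i$ with $(u_i,v_i)$ an adjacency and $F$ the set of edges $j$ for which one of $(p_j,e_j),(e_j,q_j)$ is an adjacency, so $a=|T|+|F|$. If $(p_j,e_j)$ is an adjacency, then by (P2) ($u_{r_j}\prec p_j$ and $p_j$ immediately precedes $e_j\prec v_{r_j}$) the marker $p_j$ lies strictly between $u_{r_j}$ and $v_{r_j}$ in $\Pi'$, so $r_j\notin T$; symmetrically $(e_j,q_j)$ an adjacency forces $l_j\notin T$. Choosing for each $j\in F$ the thus-blocked endpoint yields a vertex cover $C$ of $(V,F)$ with $C\cap T=\emptyset$, so $|T|\le n-|C|\le n-\tau(F)$, where $\tau(\cdot)$ is the minimum vertex cover size. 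Since $|F|-\tau(F)$ is nondecreasing as edges are added (one more edge raises $|F|$ by $1$ and $\tau$ by at most $1$), $a=|F|+|T|\le n+(|F|-\tau(F))\le n+(m-\tau(E))=m+\alpha(G)$, where $\alpha(G)=n-\tau(E)$ is the independence number. Thus $a\ge m+k$ gives $\alpha(G)\ge k$, i.e.\ $G$ has an independent set of size $\ge k$.

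\textbf{``If''.} Given an independent set $S$ with $|S|=k$, I would build $\Pi'$ as $z_1\cdots z_{n+m}$ followed by the backbone $u_1\,v_1\,u_2\,v_2\cdots u_n\,v_n$, inserting each $p_j,q_j,e_j$ into the gaps of this backbone: if $l_j\notin S$, place the contiguous block $e_j\,q_j$ into the gap between $u_{l_j}$ and $v_{l_j}$ and place $p_j$ just after $v_{r_j}$; if $l_j\in S$ (so $r_j\notin S$), place the contiguous block $p_j\,e_j$ into the gap between $u_{r_j}$ and $v_{r_j}$ and place $q_j$ just before $u_{l_j}$. Inside each gap the inserted items are ordered as forced by their relative positions in $Z$: all inserted ``$p\,e$'' blocks (in the order their edges occur in the relevant $\langle u_iv_i\rangle$) before all inserted ``$e\,q$'' blocks (likewise ordered), and in a gap following some $v_i$ all inserted $p$'s before all inserted $q$'s. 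Using (P1), (P2) and the remaining comparabilities of $\Pi$ — the key extra one being $e_{j'}\prec e_j$ whenever $r_{j'}=l_j$, which is exactly what makes the ``$pe$-before-$eq$'' rule consistent — one checks that $\Pi'$ respects $\prec$, hence is a linear extension of $\Pi$. Because $S$ is independent, the gap between $u_i$ and $v_i$ for $i\in S$ gets no inserted marker, and by (P3) none is forced there, so $(u_i,v_i)$ is an adjacency for each $i\in S$; and each contiguous block $e_j\,q_j$ (resp.\ $p_j\,e_j$) makes $(e_j,q_j)$ (resp.\ $(p_j,e_j)$) an adjacency. Hence $\Pi'$ has at least $k+m$ adjacencies with $\Gamma$.

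\textbf{Main obstacle.} The routine-but-careful part is deriving (P1)--(P3) from the definitions of $Z$, $Z_2$, $\I$, keeping track of the two occurrences of each marker. The genuinely delicate part is the ``if'' direction: showing the explicitly constructed $\Pi'$ is a bona fide linear extension, which comes down to a finite case analysis verifying that every pair of markers comparable in $\Pi$ is ordered correctly by the construction — the nontrivial cases being two distinct edge markers $e_{j'},e_j$, and mixed $p/q/e$ pairs sharing a gap, where the within-gap ordering rule must be matched to $\prec$.
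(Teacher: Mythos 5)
Your proposal is correct, and the forward direction is essentially the paper's construction in disguise: your ``backbone plus insertion'' order is exactly the subsequence $Z'$ of $Z$ obtained by keeping, for each edge $j$, the blue occurrences ($q_j$ before $u_{l_j}$, $p_je_j$ before $v_{r_j}$) when $l_j\in S$ and the red occurrences ($e_jq_j$ before $v_{l_j}$, $p_j$ after $v_{r_j}$) otherwise. You could shortcut your ``delicate'' case analysis entirely by one observation the paper leaves implicit: if $x\prec y$ in $\Pi$ then both occurrences of $x$ precede both occurrences of $y$ in $Z_2$, so \emph{any} subsequence of $Z$ retaining exactly one occurrence of each marker is automatically a linear extension of $\Pi$; no pairwise checking (and no appeal to your property (P3)) is needed. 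Where you genuinely diverge is the reverse direction. The paper uses a local exchange argument: if $u_av_a$ and $u_bv_b$ are both adjacencies for an edge $j=ab$, neither $p_je_j$ nor $e_jq_j$ can be one, so moving $e_j,q_j$ between $u_a$ and $v_a$ trades one adjacency for another and makes the set $T=\{i: u_iv_i \text{ adjacent}\}$ an independent set of size at least $k$ outright. You instead prove the global bound $n_{\textsl{adj}}\le m+\alpha(G)$ by noting each realized edge pair blocks one endpoint from $T$, so $|T|\le n-\tau(F)$, and then using monotonicity of $|F|-\tau(F)$ together with $\alpha(G)=n-\tau(E)$; this is valid and arguably cleaner as a pure counting argument. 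The trade-off: the paper's exchange argument is constructive and is exactly what supplies the polynomial-time map $g$ invoked immediately after the lemma in the L-reduction, whereas your argument as written is existential (it refers to $\tau(F)$ and $\alpha(G)$, which are not efficiently computable). It can be made constructive with one extra line, though: edges of $G$ inside $T$ all lie in $E\setminus F$, so there are at most $m-|F|$ of them, and since $|T|\ge k+(m-|F|)$, deleting one endpoint per such edge leaves an independent set of size at least $k$.
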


\begin{proof}
We first prove the direct implication.
Suppose $G$ has an independent set $V'$ of $k$ vertices.
For each edge $j$ incident to two vertices $a$ and $b$,
where $1 \le j \le m$ and $1 \le a < b \le n$,
the relative positions of the double occurrences
of the three markers $p_j,q_j,e_j$
among the four markers $u_a,v_a,u_b,v_b$
in $Z$ are as follows:
$$
Z:
\quad
\;\ldots\;
\blue{q_j}
\;\ldots\;
u_a
\;\ldots\;
\red{e_j q_j}
\;\ldots\;
v_a
\;\ldots\;
u_b
\;\ldots\;
\blue{p_j e_j}
\;\ldots\;
v_b
\;\ldots\;
\red{p_j}
\;\ldots\;
.
$$
If $a \in V'$, remove
$\red{e_j q_j}$
and
$\red{p_j}$;
otherwise, remove
$\blue{q_j}$
and
$\blue{p_j e_j}$.
Then the resulting subsequence $Z'$ of $Z$ is a permutation of $\Sigma$
and a linear extension of $\Pi$.
Moreover, between $\Gamma$ and $Z'$,
we have an adjacency $u_i v_i$ for each vertex $i \in V'$,
and an adjacency either $\blue{p_j e_j}$ or $\red{e_j q_j}$ for each edge $j$, $1 \le j \le m$.
The total number of adjacencies is at least $m + k$.

We next prove the reverse implication.
Suppose that $\Pi$ admits a linear extension $\Pi'$
having at least $m + k$ adjacencies with $\Gamma$.
Because of the separation markers,
the only possible adjacencies of $\Gamma$ and $\Pi'$
are $p_j e_j$ and $e_j q_j$ for $1 \le j \le m$,
and $u_i v_i$ for $1 \le i \le n$.
Moreover,
we cannot have both adjacencies $p_j e_j$ and $e_j q_j$ for the same $j$,
because the interval for $q_j$ precedes the interval for $p_j$ in $\Pi$,
opposite to their order in $\Gamma$.
Thus there can be at most $m$ adjacencies
among $p_j e_j$ and $e_j q_j$ for $1 \le j \le m$.
Thus the at least $m + k$ adjacencies of $\Gamma$ and $\Pi'$
include at least $k$ adjacencies among $u_i v_i$ for $1 \le i \le n$.

Suppose that these adjacencies include both $u_a v_a$ and $u_b v_b$
for two vertices $a$ and $b$, $1 \le a < b \le n$,
and there is an edge $j$, $1 \le j \le m$, incident to both $a$ and $b$.
Note that in $\Pi$, the interval for $u_a$ precedes the interval for $e_j$,
and the interval for $q_j$ precedes the interval for $v_a$.
Since $u_a v_a$ is an adjacency of $\Gamma$ and $\Pi'$,
we must have $q_j$ before $u_a v_a$, and $e_j$ after $u_a v_a$ in $\Pi'$.
Thus $e_j q_j$ cannot be an adjacency.
Similarly,
since $u_b v_b$ is an adjacency,
$p_j e_j$ cannot be an adjacency.
Now remove $e_j$ and $q_j$ from $\Pi'$
and insert them back between $u_a$ and $v_a$,
thereby destroying the adjacency $u_a v_a$ but creating a new adjacency $e_j q_j$.
Then we obtain another linear extension of $\Pi$ having at least as many adjacencies
with $\Gamma$.
By this replacement argument, we can assume without loss of generality
that if the adjacencies of $\Gamma$ and $\Pi'$
include both $u_a v_a$ and $u_b v_b$
for two different vertices $a$ and $b$,
then there can be no edge incident to both $a$ and $b$.
Thus the subset of at least $k$ vertices $i$
for all adjacencies $u_i v_i$ of $\Gamma$ and $\Pi'$
is an independent set in $G$.
\end{proof}

We are now ready to present two L-reductions from \mis{3} to
\maxadj\li\ and \minbrk\li, respectively.
Recall~\eqref{eq:f} and~\eqref{eq:g}.

Since $G$ has maximum degree~$3$,
each vertex is incident to at most three edges.
On the other hand, each edge is incident to exactly two vertices.
By double counting the number of vertex-edge incidences,
we have $2m \le 3n$,
and hence $m \le \frac32 n$.

Let $k^*$ be the maximum number of vertices in an independent set in $G$.
By a greedy algorithm that repeatedly removes a vertex and its (at most $3$) adjacent vertices,
we can obtain an independent set of at least $n / 4$ vertices.
Thus $k^* \ge n / 4$, and hence $n \le 4 k^*$.

In both L-reductions,
the polynomial-time function $f$
is just the construction described above which satisfies Lemma~\ref{lem:li}.
By the direct implication of Lemma~\ref{lem:li},
the maximum number of adjacencies is
$$
m + k^* \le \frac32n + k^* \le 6 k^* + k^* = 7 k^*.
$$
Thus we can set $\alpha = 7$ for the L-reduction to \maxadj\li.

Since the number of markers is $3n + 4m$,
the minimum number of breakpoints is
$$
(3n + 4m - 1) - (m + k^*)
\le 3n + 3m - k^*
\le 3n + 3\cdot\frac32 n - k^*
= \frac{15}2 n - k^*
\le 30 k^* - k^*
= 29 k^*.
$$
Thus we can set $\alpha = 29$ for the L-reduction to \minbrk\li.

For the other direction,
consider any solution to the reduced instance,
with $m + k$ adjacencies and $(3n + 4m - 1) - (m + k)$ breakpoints.
By the reverse implication of Lemma~\ref{lem:li},
there is a function $g$ that converts this solution
to an independent set of at least $k$ vertices in $G$.
The absolute error of the solution, for both \maxadj\li\ and \minbrk\li,
is exactly $|k^* - k|$.
The absolute error of the converted solution for \mis{3}
is at most $|k^* - k|$.
Thus we can set $\beta = 1$ for both L-reductions.

Since \mis{3} is APX-hard, it follows by the two L-reductions that
the two optimization problems \maxadj\li\ and \minbrk\li\ are APX-hard too.
This completes the proof of Theorem~\ref{thm:li}.

\section{APX-hardness of aligning two weak orders}

In this section we prove Theorem~\ref{thm:ww}.
We prove the APX-hardness of the two problems
\maxadj\ww\ and \minbrk\ww\
by two L-reductions (based on the same construction)
from the APX-hard problem \sat{3}{2}~\cite{BK99,BK03}.

Given a set $X$ of $n$ variables and a set $C$ of $m$ clauses,
where each variable has exactly $3$ literals
(in $3$ different clauses)
and each clause is the disjunction of exactly $2$ literals
(of $2$ different variables),
\sat{3}{2} is the problem of finding an assignment of $X$ that satisfies
the maximum number of clauses in $C$.
The problem \sat{3}{2} is known to be APX-hard even if 
the $3$ literals of each variable are neither all positive nor all negative;
see for example the gap-preserving reduction from
\textsc{Max-Cut} in cubic graphs to this problem
in the DIMACS version of~\cite{BK99}.
As a result, we can assume that each variable has
either $2$ positive and $1$ negative literals,
or $1$ positive and $2$ negative literals.

Let $(X,C)$ be an instance of \sat{3}{2},
where $X$ is a set of $n$ variables $x_i$, $1 \le i \le n$,
and $C$ is a set of $m$ clauses $c_j = c_j^1 \lor c_j^2$, $1 \le j \le m$.
We will construct a set $\Sigma$ of markers,
and two weak orders $\Gamma$ and $\Pi$ on $\Sigma$,
in the following.

\medskip
The set $\Sigma$ includes $12n + 5m$ markers:
\begin{itemize}\setlength\itemsep{0pt}

\item
For each variable $x_i$, $1 \le i \le n$,
$\Sigma$ includes seven variable markers
$p_i, q_i, r_i, s_i, t_i, u_i, v_i$,
a pair of positive selection markers
$(a_i^+, b_i^+)$,
a pair of negative selection markers
$(a_i^-, b_i^-)$,
and a dummy marker $d_i$.

\item
For each clause $c_j$ consisting of two literals $c_j^1$ and $c_j^2$, $1 \le j \le m$,
$\Sigma$ includes
a pair of literal markers
$(e_j^1, f_j^1)$ for $c_j^1$,
a pair of literal markers
$(e_j^2, f_j^2)$ for $c_j^2$,
and a separation marker $z_j$.

\end{itemize}

The two weak orders $\Gamma$ and $\Pi$ are schematically represented as follows:
\begin{align*}
\Gamma&:
\quad
\langle C_1 \rangle
\quad
\{ z_1 \}
\quad
\langle C_2 \rangle
\quad
\{ z_2 \}
\quad
\ldots
\quad
\langle C_m \rangle
\quad
\{ z_m \}
\qquad
\langle X_1 \rangle
\quad
\langle X_2 \rangle
\quad
\ldots
\quad
\langle X_n \rangle
\\
\Pi&:
\quad
\langle Y_1 \rangle
\quad
\langle Y_2 \rangle
\quad
\ldots
\quad
\langle Y_n \rangle
\qquad
\{ z_1 \}
\quad
\{ z_2 \}
\quad
\ldots
\quad
\{ z_m \}
\;
.
\end{align*}

For each clause $c_j$, $1 \le j \le m$,
the clause gadget $\langle C_j \rangle$ in $\Gamma$
consists of two buckets:
$$
\langle C_j \rangle:
\quad
E_j
\quad
F_j
\;
.
$$

For each variable $x_i$, $1 \le i \le n$,
the variable gadget
$\langle X_i \rangle$ in $\Gamma$
consists of eight buckets:
$$
\langle X_i \rangle:
\quad
P'_i
\quad
Q'_i
\quad
R'_i
\quad
S'_i
\quad
T'_i
\quad
U'_i
\quad
V'_i
\quad
D_i
\;
.
$$

For each variable $x_i$, $1 \le i \le n$,
the selection gadget
$\langle Y_i \rangle$ in $\Pi$
consists of nine buckets:
$$
\langle Y_i \rangle:
\quad
P_i
\quad
Q_i
\quad
R_i
\quad
S_i
\quad
T_i
\quad
U_i
\quad
V_i
\quad
A_i
\quad
B_i
\;
.
$$

We first place the variable markers.
For each $i$, $1 \le i \le n$,
put
$p_i$ in both $P'_i$ and $P_i$,
$q_i$ in both $Q'_i$ and $Q_i$,
$r_i$ in both $R'_i$ and $R_i$,
$s_i$ in both $S'_i$ and $S_i$,
$t_i$ in both $T'_i$ and $T_i$,
$u_i$ in both $U'_i$ and $U_i$,
$v_i$ in both $V'_i$ and $V_i$.

We next place the selection markers.
For each $i$, $1 \le i \le n$,
\begin{itemize}\setlength\itemsep{0pt}
\item
put
$a_i^+$ in $P'_i$,
$b_i^+$ in $Q'_i$,
$a_i^-$ in $U'_i$,
$b_i^-$ in $V'_i$,
\item
put
$a_i^+$
and
$a_i^-$ in $A_i$,
$b_i^+$
and
$b_i^-$ in $B_i$.
\end{itemize}

We next place the dummy markers and literal markers in $\Gamma$:
\begin{itemize}\setlength\itemsep{0pt}
\item
For each $i$, $1 \le i \le n$,
put the dummy marker $d_i$ in $D_i$.
\item
For each $j$, $1 \le j \le m$,
put the two literal markers
$e_j^1$ and $e_j^2$
in $E_j$,
and put the two literal markers
$f_j^1$ and $f_j^2$
in $F_j$.
\end{itemize}

We then place the dummy markers and literal markers in $\Pi$.
Each variable $x_i$, $1 \le i \le n$,
has three literals $x_i^1, x_i^2, x_i^3$ in three clauses.
Without loss of generality,
assume that $x_i^1$ is positive and $x_i^3$ is negative.
Each literal $x_i^g$, $1 \le i \le n$ and $1 \le g \le 3$,
is $c_j^h$ for some $j$ and $h$, $1 \le j \le m$ and $1 \le h \le 2$,
which has two corresponding literal markers $e_j^h$ and $f_j^h$.
Put these six literal markers for $x_i$ and the dummy marker $d_i$
in the seven buckets
$P_i,Q_i,R_i,S_i,T_i,U_i,V_i$
as follows,
one marker in each bucket:
\begin{itemize}\setlength\itemsep{0pt}
\item
Put the two literal markers $e_j^h$ and $f_j^h$
of $x_i^1$ in $P_i$ and $Q_i$, respectively.
\item
Put the two literal markers $e_j^h$ and $f_j^h$
of $x_i^3$ in $U_i$ and $V_i$, respectively.
\item
For $x_i^2$,
\begin{itemize}\setlength\itemsep{0pt}
\item
if it is positive,
put the two literal markers $e_j^h$ and $f_j^h$
of $x_i^2$ in $R_i$ and $S_i$, respectively,
and put the dummy marker $d_i$ in $T_i$,
\item
if it is negative,
put the two literal markers $e_j^h$ and $f_j^h$
of $x_i^2$ in $S_i$ and $T_i$, respectively,
and put the dummy marker $d_i$ in $R_i$.
\end{itemize}
\end{itemize}

\begin{figure}[htbp]
\begin{align*}
\Gamma:
&\qquad
\{ e_1^1, e_1^2 \}
\;
\{ f_1^1, f_1^2 \}
\quad
\{ z_1 \}
\quad
\{ e_2^1, e_2^2 \}
\;
\{ f_2^1, f_2^2 \}
\quad
\{ z_2 \}
\quad
\{ e_3^1, e_3^2 \}
\;
\{ f_3^1, f_3^2 \}
\quad
\{ z_3 \}
\\
&\qquad
\{ p_1, a_1^+ \}\;
\{ q_1, b_1^+ \}\;
\{ r_1 \}\;
\{ s_1 \}\;
\{ t_1 \}\;
\{ u_1, a_1^- \}\;
\{ v_1, b_1^- \}\;
\{ d_1 \}
\\
&\qquad
\{ p_2, a_2^+ \}\;
\{ q_2, b_2^+ \}\;
\{ r_2 \}\;
\{ s_2 \}\;
\{ t_2 \}\;
\{ u_2, a_2^- \}\;
\{ v_2, b_2^- \}\;
\{ d_2 \}
\\\smallskip\\
\Gamma':
&\qquad
e_1^2
\quad
\hl{e_1^1 f_1^1}
\quad
f_1^2
\quad
z_1
\quad
e_2^2
\quad
\hl{e_2^1 f_2^1}
\quad
f_2^2
\quad
z_2
\quad
e_3^1
\quad
\hl{e_3^2 f_3^2}
\quad
f_3^1
\quad
z_3
\\
&\qquad
p_1
\quad
\true{a_1^+ b_1^+}
\quad
\true{q_1 r_1}
\quad
\true{s_1 t_1}
\quad
a_1^-
\quad
\true{u_1 v_1}
\quad
b_1^-
\quad
d_1
\\
&\qquad
a_2^+
\quad
\false{p_2 q_2}
\quad
b_2^+
\quad
\false{r_2 s_2}
\quad
\false{t_2 u_2}
\quad
\false{a_2^- b_2^-}
\quad
v_2
\quad
d_2
\\\smallskip\\
\Pi:
&\qquad
\{ p_1, e_1^1 \}\;
\{ q_1, f_1^1 \}\;
\{ r_1, e_2^1 \}\;
\{ s_1, f_2^1 \}\;
\{ t_1, d_1 \}\;
\{ u_1, e_3^1 \}\;
\{ v_1, f_3^1 \}\;
\{ a_1^+, a_1^- \}\;
\{ b_1^+, b_1^- \}
\\
&\qquad
\{ p_2, e_1^2 \}\;
\{ q_2, f_1^2 \}\;
\{ r_2, d_2 \}\;
\{ s_2, e_2^2 \}\;
\{ t_2, f_2^2 \}\;
\{ u_2, e_3^2 \}\;
\{ v_2, f_3^2 \}\;
\{ a_2^+, a_2^- \}\;
\{ b_2^+, b_2^- \}
\\
&\qquad
\{ z_1 \}
\quad
\{ z_2 \}
\quad
\{ z_3 \}
\\\smallskip\\
\Pi':
&\qquad
p_1
\quad
\hl{e_1^1 f_1^1}
\quad
\true{q_1 r_1}
\quad
\hl{e_2^1 f_2^1}
\quad
\true{s_1 t_1}
\quad
d_1
\quad
e_3^1
\quad
\true{u_1 v_1}
\quad
f_3^1
\quad
a_1^-
\quad
\true{a_1^+ b_1^+}
\quad
b_1^-
\\
&\qquad
e_1^2
\quad
\false{p_2 q_2}
\quad
f_1^2
\quad
d_2
\quad
\false{r_2 s_2}
\quad
e_2^2
\quad
f_2^2
\quad
\false{t_2 u_2}
\quad
\hl{e_3^2 f_3^2}
\quad
v_2
\quad
a_2^+
\quad
\false{a_2^- b_2^-}
\quad
b_2^+
\\
&\qquad
z_1 \quad z_2 \quad z_3
\end{align*}
\caption{The two weak orders $\Gamma$ and $\Pi$ and their linear extensions $\Gamma'$ and $\Pi'$
corresponding to the \sat{3}{2} instance
$c_1 = x_1 \lor x_2$, $c_2 = x_1 \lor \bar x_2$,
$c_3 = \bar x_1 \lor \bar x_2$
and the assignment $x_1 = \mathrm{true}$ and $x_2 = \mathrm{false}$.}
\label{fig:sat}
\end{figure}

This completes the construction.
Refer to Figure~\ref{fig:sat} for an example.
Note that every bucket contains at most two markers.

\begin{lemma}\label{lem:ww}
There exists an assignment of $X$
satisfying at least $k$ clauses in $C$
if and only if
$\Gamma$ and $\Pi$
admit linear extensions
with at least $4n + k$ adjacencies.
\end{lemma}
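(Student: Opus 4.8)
The plan is to follow the structure of the proof of Lemma~\ref{lem:li}: prove the two implications of the biconditional separately, using the canonical linearizations illustrated in Figure~\ref{fig:sat}.

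For the forward implication, given an assignment $\phi$ satisfying $k$ clauses, I would linearize each variable gadget in one of two canonical ways according to $\phi(x_i)$. If $\phi(x_i)$ is true, order the buckets of $\langle X_i\rangle$ and $\langle Y_i\rangle$ so as to realize the four adjacencies $a_i^+b_i^+$, $q_ir_i$, $s_it_i$, $u_iv_i$ (the pattern of $x_1$ in Figure~\ref{fig:sat}); if $\phi(x_i)$ is false, realize instead $p_iq_i$, $r_is_i$, $t_iu_i$, $a_i^-b_i^-$ (the pattern of $x_2$). Call these the \emph{true} and \emph{false} configurations; each contributes exactly four adjacencies, so $4n$ in total. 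Then, for every clause $c_j$ satisfied by $\phi$, pick a literal $c_j^h = x_i^g$ made true by $\phi$; since its sign agrees with $\phi(x_i)$, the chosen configuration of $\langle Y_i\rangle$ drops the two literal markers $e_j^h, f_j^h$ into two consecutive buckets whose internal orderings are still free. Order $E_j, F_j$ in $\Gamma'$ so that $e_j^h$ ends $E_j$ and $f_j^h$ begins $F_j$, and order those two buckets of $\langle Y_i\rangle$ so that $e_j^h$ ends its bucket and $f_j^h$ begins the next; this yields the extra adjacency $e_j^h f_j^h$. What remains is routine: checking that the orderings are genuine linear extensions of $\Gamma$ and $\Pi$, and that the at most two clause adjacencies attached to a single variable (all of the same sign, since $\phi$ fixes the variable) can be accommodated simultaneously. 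This produces at least $4n + k$ adjacencies.

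For the reverse implication, suppose $\Gamma'$ and $\Pi'$ have at least $4n + k$ adjacencies. First, a localization argument: no separation marker $z_j$ lies in any adjacency, since in $\Gamma$ its neighbours are literal markers (or, for $z_m$, the first marker of $\langle X_1\rangle$) whereas in $\Pi$ they are other separation markers (or, for $z_1$, the selection marker ending $\langle Y_n\rangle$), and these never coincide; similarly no adjacency crosses from $\langle X_i\rangle$ to $\langle X_{i+1}\rangle$, because the last marker $d_i$ of $\langle X_i\rangle$ in $\Gamma'$ sits in a non-final bucket of $\langle Y_i\rangle$ in $\Pi'$, so its successor there stays within $\langle Y_i\rangle$. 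Hence every adjacency lies inside a single clause gadget or a single variable gadget. Inside $\langle C_j\rangle$ the only possible adjacencies are $e_j^1 f_j^1$ and $e_j^2 f_j^2$, and at most one occurs because ordering $E_j$ puts only one $e_j^h$ last; when one does, call $c_j$ \emph{selected}, with \emph{selected literal} $c_j^h$. Inside $\langle X_i\rangle$ the seven variable markers occur in the order $p_iq_ir_is_it_iu_iv_i$ in both permutations, and a finite check shows that the only possible adjacencies among the markers of $\langle X_i\rangle$ are the eight pairs $p_iq_i$, $q_ir_i$, $r_is_i$, $s_it_i$, $t_iu_i$, $u_iv_i$, $a_i^+b_i^+$, $a_i^-b_i^-$; moreover the conflict graph on these eight pairs, whose edges join pairs that cannot be realized simultaneously, is an $8$-cycle (consecutive pairs of the chain conflict because they force opposite orderings of a two-element bucket, $a_i^+b_i^+$ conflicts with both $a_i^-b_i^-$ and $p_iq_i$, and $a_i^-b_i^-$ conflicts with $u_iv_i$). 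An $8$-cycle has maximum independent set size $4$, with exactly two maximum independent sets, which are precisely the true and false configurations; hence each variable gadget contributes at most four adjacencies. Summing over gadgets, the total is at most $4n$ plus the number of selected clauses, so at least $k$ clauses are selected.

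It remains to extract a satisfying assignment, and this is the main obstacle, because different selected literals may constrain a variable gadget inconsistently: realizing $e_j^h f_j^h$ for a positive literal of $x_i$ rules out the false configuration of $\langle X_i\rangle$ and for a negative literal rules out the true one, so if both a positive and a negative literal of $x_i$ are selected then $\langle X_i\rangle$ has at most three adjacencies. I would use this slack in a replacement argument: process the variables one at a time, and whenever $\langle X_i\rangle$ is not already in a clean configuration consistent with all its selected literals, re-order the buckets of $\langle X_i\rangle$ and $\langle Y_i\rangle$ into the configuration of the variable's majority sign, keeping the at most two clause adjacencies of that sign intact while destroying only the at most one clause adjacency of the minority sign; since the gadget had at most three adjacencies, it gains at least one and loses at most one, so the total does not decrease, and the step touches neither the other variable gadgets nor the clause adjacencies of other variables and so can be iterated. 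Afterwards every $\langle X_i\rangle$ is in a clean configuration, at least $k$ clauses are still selected, and setting $\phi(x_i)$ to the sign of the configuration of $x_i$ makes every selected literal true, so $\phi$ satisfies at least $k$ clauses.
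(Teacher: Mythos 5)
Your proposal is correct and follows essentially the same route as the paper: the same canonical true/false orderings for the forward direction, and for the reverse direction the same conflict structure (your $8$-cycle with two maximum independent sets is exactly the paper's Figure~\ref{fig:conflict} and Property~5) together with a local reordering argument that repairs inconsistent variable gadgets without decreasing the adjacency count before reading off the assignment. The only differences are cosmetic (majority-sign rule versus the paper's explicit case analysis of the three causes of inconsistency), so nothing further is needed.
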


\begin{proof}
We first prove the direct implication.
Suppose there exists an assignment of $X$
satisfying at least $k$ clauses in $C$.
We will linearize $\Gamma$ and $\Pi$
by ordering the markers in their buckets.
For each $i$, $1 \le i \le n$,
consider two cases:
\begin{itemize}\setlength\itemsep{0pt}

\item
$x_i$ is true.
\begin{itemize}\setlength\itemsep{0pt}

\item
Order the markers in the six buckets
$Q_i,R_i,S_i,T_i,U_i,V_i$
and correspondingly in the six buckets
$Q'_i,R'_i,S'_i,T'_i,U'_i,V'_i$
such that
$q_i r_i, s_i t_i, u_i v_i$ are three adjacencies.

\item
Order the markers in $A_i$, $B_i$, and $P'_i$
such that
$a_i^+ b_i^+$
is an adjacency.

\item
If $x_i^1 = c_j^h$,
and if the markers in $E_j$ and $F_j$ are still unordered,
order the markers in $E_j$, $F_j$, and $P_i$
such that
$e_j^h f_j^h$
is an adjacency.

\item
If $x_i^2 = c_j^h$ is positive,
and if the markers in $E_j$ and $F_j$ are still unordered,
order the markers in $E_j$ and $F_j$
such that
$e_j^h f_j^h$
is an adjacency.

\end{itemize}

\item
$x_i$ is false.
\begin{itemize}\setlength\itemsep{0pt}

\item
Order the markers in the six buckets
$P_i,Q_i,R_i,S_i,T_i,U_i$
and correspondingly in the six buckets
$P'_i,Q'_i,R'_i,S'_i,T'_i,U'_i$
such that
$p_i q_i, r_i s_i, t_i u_i$ are three adjacencies.

\item
Order the markers in $A_i$, $B_i$, and $V'_i$
such that
$a_i^- b_i^-$
is an adjacency.

\item
If $x_i^3 = c_j^h$,
and if the markers in $E_j$ and $F_j$ are still unordered,
order the markers in $E_j$, $F_j$, and $V_i$
such that
$e_j^h f_j^h$
is an adjacency.

\item
If $x_i^2 = c_j^h$ is negative,
and if the markers in $E_j$ and $F_j$ are still unordered,
order the markers in $E_j$ and $F_j$
such that
$e_j^h f_j^h$
is an adjacency.

\end{itemize}

\end{itemize}
For any bucket with two markers that are still unordered,
order them arbitrarily.
It is easy to check that
between the two linear extensions thus obtained,
there are $4n+k$ adjacencies in total,
including $3n$ adjacencies between variable markers,
$n$ adjacencies between selection markers,
and $k$ adjacencies between literal markers.

\bigskip
We next prove the reverse implication.
The following properties can be easily verified
for any two linear extensions of $\Gamma$ and $\Pi$, respectively:
\begin{enumerate}\setlength\itemsep{0pt}

\item[1.]
The only possible adjacencies are from the following sets:
\begin{align*}
\langle ef \rangle_j &= 
\{\, e_j^1 f_j^1, \, e_j^2 f_j^2 \,\},
&&1 \le j \le m,
\\
\langle ab \rangle_i &= 
\{\, a_i^+ b_i^+, \, a_i^- b_i^- \,\},
&&1 \le i \le n,
\\
\langle pqrstuv \rangle_i &= 
\{\, p_i q_i, \, q_i r_i, \, r_i s_i, \, s_i t_i, \, t_i u_i, \, u_i v_i \,\},
&&1 \le i \le n.
\end{align*}

\item[2.]
For each $j$, $1 \le j \le m$,
there is at most one adjacency from $\langle ef \rangle_j$.

\item[3.]
For each $i$, $1 \le i \le n$,
there is at most one adjacency from $\langle ab \rangle_i$.

\end{enumerate}
Observe that each of the seven buckets
$P_i,Q_i,R_i,S_i,T_i,U_i,V_i$
contains either a literal marker or a dummy marker besides a variable marker.
This implies the following property:
\begin{enumerate}\setlength\itemsep{0pt}

\item[4.]
For each $i$, $1 \le i \le n$,
each of the seven variable markers $p_i,q_i,r_i,s_i,t_i,u_i,v_i$
can participate in at most one adjacency
in $\langle pqrstuv \rangle_i$,
and hence there are at most three adjacencies from
$\langle pqrstuv \rangle_i$.

\end{enumerate}
Properties 3 and 4 together imply the following property:
\begin{enumerate}\setlength\itemsep{0pt}

\item[5.]
For each $i$, $1 \le i \le n$,
there are
at most four adjacencies from
$\langle ab \rangle_i \cup \langle pqrstuv \rangle_i$.
Moreover, if there are exactly four adjacencies from
$\langle ab \rangle_i \cup \langle pqrstuv \rangle_i$,
then these adjacencies must be either
$\{\, a_i^+ b_i^+, q_i r_i, \, s_i t_i, \, u_i v_i \,\}$
or
$\{\, p_i q_i, \, r_i s_i, \, t_i u_i, \, a_i^- b_i^- \,\}$.
\begin{itemize}\setlength\itemsep{0pt}

\item
If
$\{\, a_i^+ b_i^+, q_i r_i, \, s_i t_i, \, u_i v_i \,\}$
are the four adjacencies,
then
\begin{itemize}\setlength\itemsep{0pt}
\item
there can be an adjacency between
the two literal markers $e_j^h$ and $f_j^h$
in $P_i$ and $Q_i$
corresponding to the positive literal $x_i^1$, and
\item
there can be an adjacency between
the two literal markers $e_j^h$ and $f_j^h$
in $R_i$ and $S_i$
corresponding to the literal $x_i^2$,
if it is also positive.
\end{itemize}

\item
If
$\{\, p_i q_i, \, r_i s_i, \, t_i u_i, \, a_i^- b_i^- \,\}$
are the four adjacencies,
then
\begin{itemize}\setlength\itemsep{0pt}
\item
there can be an adjacency between
the two literal markers $e_j^h$ and $f_j^h$
in $U_i$ and $V_i$
corresponding to the negative literal $x_i^3$, and
\item
there can be an adjacency between
the two literal markers $e_j^h$ and $f_j^h$
in $S_i$ and $T_i$
corresponding to the literal $x_i^2$,
if it is also negative.
\end{itemize}

\end{itemize}

\end{enumerate}

With respect to two linear extensions of $\Gamma$ and $\Pi$, respectively,
we say that a literal $x_i^g = c_j^h$,
$1 \le i \le n$,
$1 \le g \le 3$, $1 \le j \le m$, $1 \le h \le 2$,
is \emph{realized} if $e_j^h f_j^h$ is an adjacency of the two linear extensions,
and we say that the two linear extensions are \emph{consistent} for a variable $x_i$,
$1 \le i \le n$,
if the realized literals among $x_i^1,x_i^2,x_i^3$
are either all positive or all negative.

\begin{figure}[htbp]
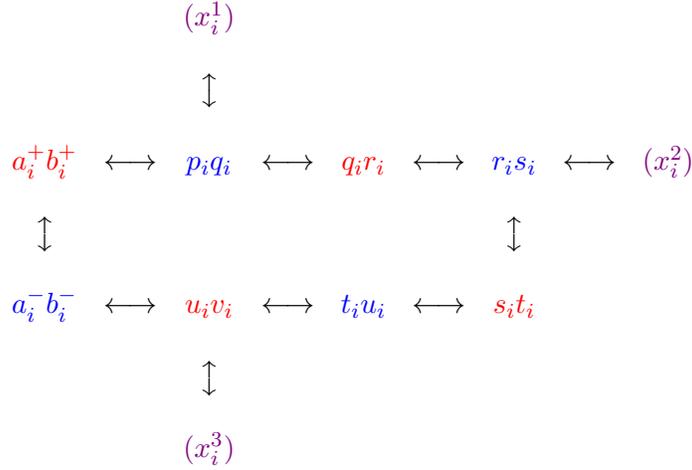

$$
\begin{array}{ccccccccc}
& & \hl{(x_i^1)} & & & & & &
\\
& & & & & & & &
\\
& & \big\updownarrow & & & & & &
\\
& & & & & & & &
\\
\true{a_i^+ b_i^+}
& \longleftrightarrow &
\false{p_i q_i}
& \longleftrightarrow &
\true{q_i r_i}
& \longleftrightarrow &
\false{r_i s_i}
& \longleftrightarrow &
\hl{(x_i^2)}
\\
& & & & & & & &
\\
\big\updownarrow & & & & & & \big\updownarrow & &
\\
& & & & & & & &
\\
\false{a_i^- b_i^-}
& \longleftrightarrow &
\true{u_i v_i}
& \longleftrightarrow &
\false{t_i u_i}
& \longleftrightarrow &
\true{s_i t_i}
& &
\\
& & & & & & & &
\\
& & \big\updownarrow & & & & & &
\\
& & & & & & & &
\\
& & \hl{(x_i^3)} & & & & & &
\end{array}
$$
\caption{Conflicts between candidate adjacencies.}
\label{fig:conflict}
\end{figure}

We say that two candidate adjacencies have a \emph{conflict} if they cannot be
both present.
Refer to Figure~\ref{fig:conflict} for the conflict graph
with edges between conflicting adjacencies
among
the eight adjacencies
$a_i^+ b_i^+$,
$a_i^- b_i^-$,
$p_i q_i$,
$q_i r_i$,
$r_i s_i$,
$s_i t_i$,
$t_i u_i$,
$u_i v_i$,
and the three adjacencies $e_j^h f_j^h$ corresponding
to the three literals $x_i^1,x_i^2,x_i^3$ of the variable $x_i$,
$1 \le i \le n$.

If $x_i^2$ is positive,
then there are three possible causes for inconsistency:
\begin{itemize}\setlength\itemsep{0pt}

\item
The three adjacencies for $x_i^1$, $x_i^2$, and $x_i^3$ are all realized.
Then $p_i q_i$, $r_i s_i$, and $u_i v_i$ cannot be realized.
Among $a_i^- b_i^-$, $a_i^+ b_i^+$,
$q_i r_i$, $s_i t_i$, and $t_i u_i$,
at most three can be realized.
We can reorder the markers of the buckets in
$\langle X_i \rangle$
and
$\langle Y_i \rangle$
such that the four adjacencies
$a_i^+ b_i^+$, $q_i r_i$, $s_i t_i$, $u_i v_i$,
and
the two adjacencies for
$x_i^1$ and $x_i^2$ are realized.

\item
The two adjacencies for $x_i^1$ and $x_i^3$ are realized,
and the adjacency for $x_i^2$ is not realized.
Then $p_i q_i$ and $u_i v_i$ cannot be realized.
Among $a_i^- b_i^-$, $a_i^+ b_i^+$,
$q_i r_i$, $r_i s_i$, $s_i t_i$, and $t_i u_i$,
at most three can be realized.
We can reorder the markers of the buckets in
$\langle X_i \rangle$
and
$\langle Y_i \rangle$
such that the four adjacencies
$a_i^+ b_i^+$, $q_i r_i$, $s_i t_i$, $u_i v_i$,
and
the adjacency for
$x_i^1$ are realized.

\item
The two adjacencies for $x_i^2$ and $x_i^3$ are realized,
and the adjacency for $x_i^1$ is not realized.
Then $r_i s_i$ and $u_i v_i$ cannot be realized.
Among $a_i^- b_i^-$, $a_i^+ b_i^+$,
$p_i q_i$, $q_i r_i$, $s_i t_i$, and $t_i u_i$,
at most three can be realized.
We can reorder the markers of the buckets in
$\langle X_i \rangle$
and
$\langle Y_i \rangle$
such that the four adjacencies
$a_i^+ b_i^+$, $q_i r_i$, $s_i t_i$, $u_i v_i$,
and
the adjacency for
$x_i^2$ are realized.

\end{itemize}

In each case, we can avoid inconsistency and realize
the same number of adjacencies.
The situation is similar if $x_i^2$ is negative.

Now suppose that there is no inconsistency.
Among the eight adjacencies
$a_i^- b_i^-$, $a_i^+ b_i^+$,
$p_i q_i$, $q_i r_i$, $r_i s_i$, $s_i t_i$, $t_i u_i$, $u_i v_i$,
there can be at most four adjacencies.
We can reorder the markers of the buckets in
$\langle X_i \rangle$
and
$\langle Y_i \rangle$
to realize exactly four adjacencies:
if no negative literal of $x_i$ is realized,
then realize
$a_i^+ b_i^+$,
$q_i r_i$, $s_i t_i$, $u_i v_i$;
otherwise,
realize
$a_i^- b_i^-$,
$p_i q_i$, $r_i s_i$, $t_i u_i$.

Suppose there exist two linear extensions of $\Gamma$ and $\Pi$, respectively,
with at least $4n + k$ adjacencies.
Then by the above analysis,
we can assume that the two linear extensions are consistent
and moreover realize exactly four adjacencies between
$\langle X_i \rangle$
and
$\langle Y_i \rangle$ for each $i$, $1 \le i \le n$.
Then the remaining at least $k$ adjacencies must be adjacencies of literal markers,
between $\langle C_j \rangle$ and $\langle Y_i \rangle$.
Assign each variable $x_i$ to true if $a_i^+ b_i^+$ is realized,
and to false if $a_i^- b_i^-$ is realized.
Then the at least $k$ adjacencies between literal markers
must correspond to at least $k$ satisfied clauses.
\end{proof}

We are now ready to present two L-reductions from \sat{3}{2} to
\maxadj\ww\ and \minbrk\ww, respectively.
Recall~\eqref{eq:f} and~\eqref{eq:g}.

Let $k^*$ be the maximum number of satisfied clauses in the given \sat{3}{2} instance.
By a naive assignment in which all $n$ variables are true,
we can make sure that all positive literals are true.
Recall that for each variable there is at least one positive literal,
and each clause contains exactly two literals.
Thus there are at least $n$ positive literals,
and at least $n/2$ clauses containing at least one positive literal,
which are satisfied.
Thus $k^* \ge \frac12 n$, and hence $n \le 2 k^*$.

In both L-reductions,
the polynomial-time function $f$
is just the construction described above which satisfies Lemma~\ref{lem:ww}.
By the direct implication of Lemma~\ref{lem:ww},
the maximum number of adjacencies is $4n + k^* \le 8k^* + k^* = 9 k^*$.
Thus we can set $\alpha = 9$ for the L-reduction to \maxadj\ww.

By double counting the number of variable-clause incidences,
we have $3n = 2m$.
Since the number of markers is $12n + 5m$,
the minimum number of breakpoints is
$(12n + 5m - 1) - (4n + k^*)
\le 8n + 5m - k^*
= 8n + 5\cdot\frac32 n - k^*
= \frac{31}2 n - k^*
\le 31 k^* - k^*
= 30 k^*$.
Thus we can set $\alpha = 30$ for the L-reduction to \minbrk\ww.

For the other direction,
consider any solution to the reduced instance,
with $4n + k$ adjacencies and $(12n + 5m - 1) - (4n + k)$ breakpoints.
By the reverse implication of Lemma~\ref{lem:ww},
there is a function $g$ that converts this solution
to a variable assignment for the \sat{3}{2} instance
that satisfies at least $k$ clauses.
The absolute error of the solution, for both \maxadj\ww\ and \minbrk\ww,
is exactly $|k^* - k|$.
The absolute error of the converted solution for \sat{3}{2}
is at most $|k^* - k|$.
Thus we can set $\beta = 1$ for both L-reductions.

Since \sat{3}{2} is APX-hard, it follows by the two L-reductions that
the two optimization problems \maxadj\ww\ and \minbrk\ww\ are APX-hard too.
This completes the proof of Theorem~\ref{thm:ww}.

\section{Exact algorithm for aligning a linear order and a weak order}

In this section we prove Proposition~\ref{prp:lw}.
Let $\Sigma$ be a set of $n$ markers.
Let $\Gamma$ be a linear order on $\Sigma$.
Let $\Pi$ be a weak order on $\Sigma$,
represented by a partition of $\Sigma$ into $k$ buckets
$\Sigma_1,\Sigma_2,\ldots,\Sigma_k$.

To linearize $\Pi$,
first partition the markers in each bucket $\Sigma_h$, $1 \le h \le k$,
into maximal blocks of markers that appear as a contiguous substring of $\Gamma$,
and denote these blocks by $B_h$.
By a simple replacement argument, we can assume without loss of generality
that, in computing a linear extension of $\Pi$ to maximize the number of adjacencies
with $\Gamma$,
the markers in each block always appear consecutively as a single unit,
in the same order as the corresponding substring of $\Gamma$.

For each $i$, $1 \le i \le k$,
and for each block $b \in B_i$,
denote by $m_\textsl{adj}(i, b)$ the maximum number of adjacencies
between $\Gamma$ and any linear extension of the weak order
represented by $\Sigma_1,\ldots,\Sigma_i$,
obtained by independently ordering the blocks in $B_h$ for each bucket $\Sigma_h$,
$1 \le h \le i$,
with the additional constraint that $b$ is the last block in $B_i$.
The table $m_\textsl{adj}(i, b)$ can be computed
by dynamic programming in polynomial time.
Then
$$
n_\textsl{adj} = \max\{\, m_\textsl{adj}(k, b) \mid b \in B_k \,\}
$$
gives the maximum number of adjacencies,
and correspondingly $n_\textsl{brk} = n - 1 - n_\textsl{adj}$
gives the minimum number of breakpoints,
between the linear order $\Gamma$ and the weak order $\Pi$.

Thus we have a polynomial-time exact algorithm for
\maxadj\lw\ and \minbrk\lw.
This completes the proof of Proposition~\ref{prp:lw}.

\section{An open question}

Is there a polynomial-time exact algorithm for \maxadj\ls\ and \minbrk\ls?

\end{document}